\numberwithin{equation}{section}
\theoremstyle{plain}
\newtheorem{theorem}{Theorem}
\newtheorem{cor}[theorem]{Corollary}
\theoremstyle{definition}
\newtheorem{XxmpX}{Example} 
\newenvironment{example}    
{
	\pushQED{\qed}\begin{XxmpX}}
	{\popQED\end{XxmpX}}
\theoremstyle{remark}
\newtheorem{remark}{Remark}
\newcommand{\tr}{\operatorname{tr}}
\newcommand{\R}{\mathbb{R}}
\newcommand{\Z}{\mathbb{Z}}
\begin{document}

\title[Static Vacuum Space-Periodic Solutions]
{5-Dimensional Space-Periodic Solutions of the Static Vacuum Einstein Equations}

\author[Khuri]{Marcus Khuri}
\address{Department of Mathematics\\
Stony Brook University\\
Stony Brook, NY 11794, USA}
\email{khuri@math.sunysb.edu}

\author[Weinstein]{Gilbert Weinstein}
\address{Physics Department and Department of Mathematics\\
Ariel University\\
Ariel, 40700, Israel}
\email{gilbertw@ariel.ac.il}

\author[Yamada]{Sumio Yamada}
\address{Department of Mathematics\\
Gakushuin University\\
Tokyo 171-8588, Japan}
\email{yamada@math.gakushuin.ac.jp}

\thanks{M. Khuri acknowledges the support of NSF Grant DMS-1708798 and Simons Foundation Fellowship 681443. S. Yamada acknowledges the support of JSPS Grant KAKENHI 17H01091.}

\begin{abstract}
An affirmative answer is given to a conjecture of Myers concerning the existence of 5-dimensional regular static vacuum solutions that balance an infinite number of black holes, which have Kasner asymptotics. A variety of examples are constructed, having different combinations of ring $S^1\times S^2$ and sphere $S^3$ cross-sectional horizon topologies.
Furthermore, we show the existence of 5-dimensional vacuum \textit{solitons} with Kasner asymptotics. These are regular static space-periodic vacuum spacetimes devoid of black holes. Consequently, we also obtain new examples of complete Riemannian manifolds of nonnegative Ricci curvature in dimension 4, and zero Ricci curvature in dimension 5, having arbitrarily large as well as infinite second Betti number.
\end{abstract}
\maketitle

\section{Introduction}

The Majumdar-Papapetrou solutions \cite{Majumdar,Papapetrou} of the 4D Einstein-Maxwell equations show that gravitational attraction and electro-magnetic repulsion may be balanced in order to support multiple black holes in static equilibrium. In \cite{myers1987}, Myers generalized these solutions to higher dimensions, and also showed that such balancing is also possible for vacuum black holes in 4 dimensions if an infinite number are aligned properly in a periodic fashion along an axis of symmetry; these 4-dimensional solutions were later rediscovered by Korotkin and Nicolai in \cite{korotkinnicolai}. Although the Myers-Korotkin-Nicolai solutions are not asymptotically flat, but rather asymptotically Kasner, they play an integral part in an extended version of static black hole uniqueness given by Peraza and Reiris \cite{PerazaReiris}. It was conjectured in \cite{myers1987} that these vacuum solutions could be generalized to higher dimensions, possibly with black holes of nontrivial topology. One of the aims of the current paper is to confirm this to be the case. Namely, a variety of examples are produced of 5D space-periodic static vacuum solutions having combinations of the ring $S^1\times S^2$ and sphere $S^3$ horizon cross-sectional topologies.

Typically multiple black holes in static vacuum require conical singularities in order to remain in equilibrium, with the singularity being interpreted as a force holding the black holes apart.  A detailed analysis of this situation was originally carried out by Bach and Weyl \cite{bachweyl} for the 4-dimensional case. In 5 dimensions the situation is complicated by the fact that multiple axes may be separated by corner points, where two linearly independent Killing fields vanish. Moreover we note that, unlike in 4D, the force along an axis is not necessarily constant in higher dimensions. 
Nevertheless, balancing is shown to hold for our examples due to periodicity and symmetry.

An asymptotically flat stationary vacuum solution which is geodesically complete, and in particular devoid of horizons, must be Minkowski space. In 4 dimensions this is a classical result of Lichnerowicz \cite{Lichnerowicz}. A modern treatment, which holds in all dimensions, shows that this follows from the rigidity statement of the positive mass theorem \cite{schoenyau1979,SchoenYau2017,witten1981} combined with the Komar expression for mass and Stokes' theorem.
Similar results hold for the Einstein-Maxwell equations when $D=4$. However, in higher dimensional supergravity theories, horizonless soliton solutions can possess nontrivial 2-cycle `bubbles' that are supported by magnetic flux supplied by Maxwell fields \cite{Kundurietal}. Here, a \textit{gravitational soliton} refers to a nontrivial, globally stationary, geodesically complete spacetime.
In the 4-dimensional vacuum case, it turns out that
the no-solitons result above essentially holds even without the assumption of asymptotic flatness \cite[Theorem 0.1]{Anderson}, the only difference being that the resulting spacetime may be a nontrivial quotient of Minkowski space instead of Minkowski space itself. It may then seem surprising that vacuum solitons do exist in 5 dimensions. In fact, we will show that there are geodesically complete non-flat space-periodic solutions to the bi-axisymmetric static vacuum equations. These examples may be constructed so that the time slices have topology given by the connected sums
\begin{equation}
\#^k \> S^2\times S^2,
\end{equation}
for any $k\geq 1$ including $k=\infty$. Furthermore, as a corollary of these results we are able to produce new examples of complete Riemannian manifolds of nonnegative Ricci curvature in dimension 4, and zero Ricci curvature in dimension 5, having arbitrarily large as well as infinite second Betti number.

This paper is organized as follows. In the next section background material needed to state the main results is given, while Section \ref{sec3} is dedicated to the statement of the main results and further discussion. In Section \ref{kn}, higher dimensional generalizations of the Myers-Korotkin-Nicolai solutions are given, answering a conjecture of Myers. The construction of space-periodic soliton solutions is also detailed in this section.
Lastly, analysis of the regularity and asymptotic properties of the solutions is given in Section \ref{regularity}.

\section{Background}

Let $\mathcal{M}^5$ be the domain of outer communication of a stationary bi-axisymmetric 5-dimensional spacetime. The orbit space $\mathcal{M}^5/[\mathbb{R}\times U(1)^2]$, with mild hypotheses \cite{hollands2011}, is homeomorphic to the right half plane $\{(\rho,z)\mid \rho>0\}$. The vacuum Einstein equations in this setting reduce to an axisymmetric harmonic map, with domain $\mathbb{R}^3\setminus \{z-\text{axis}\}$ parameterized by the cylindrical coordinates $(\rho,z,\varphi)$. The $z$-axis is decomposed into an exhaustive sequence of intervals referred to as \textit{rods}, and denoted by $\{\Gamma_l\}_{l\in I}$ for some index set $I$ which may be infinite. Rods are divided into two types, \textit{axis rods} and \textit{horizons rods}. Each \textit{axis rod} $\Gamma_l$ is defined by the vanishing of a linear combination $p_l \partial_{\phi^1}+q_l \partial_{\phi^2}$ of the generators $\partial_{\phi^1}$, $\partial_{\phi^2}$ of the $U(1)^2$ symmetry, where the tuple $(p_l,q_l)$ of relatively prime integers is called the \textit{rod structure} of $\Gamma_l$. By contrast a horizon rod $\Gamma_h$ is an interval of the $z$-axis where no closed-orbit Killing field degenerates, that is $p_h=q_h=0$, but where $|\partial_t+\Omega_1\partial_{\phi^1}+\Omega_2\partial_{\phi^2}|$ vanishes with $\partial_t$ representing the stationary Killing field, and $\Omega_1$, $\Omega_2$ denoting angular velocities of the horizon. The intersection point of two neighboring axis rods is called a \textit{corner} and has the property that both rotational Killing fields vanish, while the intersection point of an axis rod with a horizon rod is called a \textit{pole}. In order to
avoid orbifold singularities at corner points, neighboring axis rod structures must satisfy the \textit{admissibility condition}
\begin{equation}
\det\begin{pmatrix} p_l & q_l \\ p_{l+1} & q_{l+1} \end{pmatrix} = \pm 1.
\end{equation}
The collection of rods and associated rod structures completely determines the topology of horizons and the domain of outer communication, see \cite{khurimatsumotoweinsteinyamada} for a detailed analysis. In particular, if the two axis rods surrounding a horizon rod have rod structures $(1,0)$, $(0,1)$ then the horizon topology is that of a sphere $S^3$, whereas $(1,0)$, $(1,0)$ yields a ring $S^1\times S^2$, and $(1,0)$, $(q,p)$ produces the lens space $L(p,q)$. These are the only possible topologies for horizon cross-sections in the 5D stationary bi-axisymmetric setting.

Recall that the stationary bi-axisymmetric vacuum Einstein equations \cite{Hollands2012,shiromizu} reduce to solving the following harmonic map equations
\begin{align}\label{equationshm}
\begin{split}
&\Delta f_{ij}-f^{nm}\nabla^{k}f_{im}\nabla_{k}f_{jn}
+f^{-1}\nabla^{n}\omega_{i}\nabla_{n}\omega_{j}=0,\\
&\Delta\omega_{i}-f^{nm}\nabla^{k}f_{in}\nabla_{k}\omega_{m}
-f^{nm}\nabla^{k}f_{nm}\nabla_{k}\omega_{i}=0,
\end{split}
\end{align}
where $F=(f_{ij})$ is a $2\times 2$ symmetric matrix which is positive definite away from the axes, $f=\det F$, and $\omega=(\omega_1,\omega_2)^{t}$ are twist potentials associated with the $U(1)^2$ symmetry. The spacetime metric on $\mathcal{M}^5$ can be constructed from these quantities and expressed in Weyl-Papapetrou coordinates by
\begin{equation}\label{spacetimem}
g=e^{2\alpha}(d\rho^2+dz^2)-f^{-1}\rho^2 dt^2
+f_{ij}(d\phi^{i}+v^{i}dt)(d\phi^{j}+v^{j}dt).
\end{equation}
Note that this exhibits the interpretation of rod structures as vectors $(p_l,q_l)^t$ lying in the (1-dimensional) kernel of the matrix $F$ at an axis rod $\Gamma_l$.
The functions $v^i$ and $\alpha$ may be obtained by quadrature, in particular
they are obtained by integrating the equations
\begin{equation}\label{vequation}
v^i_{\rho}=\rho f^{-1}f^{ij}\omega_{j,z},\quad\quad\quad
v^i_{z}=-\rho f^{-1}f^{ij}\omega_{j,\rho},
\end{equation}
and
\begin{align}\label{alpha}
\begin{split}
	\alpha_\rho =&
	\frac\rho8 \left[ (\log f)_\rho^2 - (\log f)_z^2 + \tr F^{-1}F_\rho F^{-1}F_\rho
	- \tr F^{-1}F_z F^{-1}F_z
 - \frac4\rho (\log f)_\rho + \frac2{f}F^{-1}(\omega_\rho^2-\omega_z^2) \right], \\
	\alpha_z =& \frac\rho4 \left[ (\log f)_\rho(\log f)_z + \tr F^{-1}F_\rho F^{-1}F_z	 -\frac2\rho (\log f)_z + \frac2{f} F^{-1}\omega_{\rho}\omega_z \right],
\end{split}
\end{align}
where we have used the notation $F^{-1}\mathbf{v}\mathbf{w}:=\mathbf{v}^t F^{-1}\mathbf{w}$ for vectors $\mathbf{v}$ and $\mathbf{w}$. The integrability conditions for \eqref{vequation} and \eqref{alpha} correspond to the harmonic map equations \eqref{equationshm}. If $\Gamma$ denotes the union of all axis rods, then the relevant harmonic map $\Phi:\mathbb{R}^3\setminus\Gamma\rightarrow SL(3,\mathbb{R})/SO(3)$ is built from $(F,\omega)$ and is represented by a $3\times 3$ symmetric positive definite unimodular matrix \cite{khuriweinsteinyamada2017,maison1979ehlers}. Boundary conditions are imposed on the axis in order to achieve the desired rod structures, and the potentials $\omega$ are prescribed to be constants $\mathbf{c}_l\in\R^2$ on each axis rod $\Gamma_l$, such that the values of the constants agree on consecutive axis rods. Hence, the potential constants only change across horizon rods, and the difference determines the horizon angular momentum.

Due to the axisymmetries, it is possible that conical singularities form on the axes when constructing the spacetime metric \eqref{spacetimem}. The conical singularity at a point $(0,z_0)$ on an axis rod $\Gamma_{i}$, with rod structure $\mathbf{v}=(v^1,v^2)=(p_l,q_l)$, is measured by the angle defect $\theta\in(-\infty,2\pi)$ associated with the 2-dimensional cone formed by the orbits of $v^j\partial_{\phi^j}$ over the line $z=z_0$. This may be expressed as
\begin{equation}\label{conicalformula}
\frac{2\pi}{2\pi-\theta}=\lim_{\rho\rightarrow 0}\frac{2\pi\cdot\mathrm{Radius}}{\mathrm{Circumference}}
=\lim_{\rho\rightarrow 0}
\frac{\int_{0}^{\rho}\sqrt{e^{2\alpha}}}{\sqrt{f_{ij}v^{i}v^{j}}}
=\lim_{\rho\rightarrow 0}
\sqrt{\frac{\rho^2 e^{2\alpha}}{f_{ij}v^{i}v^{j}}}.
\end{equation}
A zero angle defect indicates the absence of a conical singularity.
In this case, with the aid of a change of coordinates from polar to Cartesian, it is straightforward to check that this condition is necessary and sufficient for the metric to be extendable across the axis, assuming that analytic regularity has been established. We denote by $b_l$ the value of $\log\left(\frac{2\pi}{2\pi-\theta}\right)$ on the axis rod $\Gamma_l$, and will refer to this quantity as the \textit{logarithmic angle defect}. This is well defined since the angle defect is constant on each axis rod \cite{chenteo,harmark2004stationary}. The conical singularity on $\Gamma_l$ is said to exhibit an \textit{angle deficit} if $b_l>0$, and an \textit{angle surplus} if $b_l<0$.

\section{Statement of Results}\label{sec3}

The Myers-Korotkin-Nicolai solutions \cite{myers1987,korotkinnicolai} are axisymmetric static 4-dimensional spacetimes balancing an infinite number of nondegenerate black holes, which are strung along the axis of rotation at periodic intervals. The domain of outer communication (DOC) has topology $\mathcal{M}^4=\mathbb{R}\times M^3$ with
$M^3=\mathbb{R}^3\setminus \cup_{i=1}^{\infty}B_i^3$, where each 3-ball $B_i^3$ represents a single black hole. These solutions are space-periodic, namely the group $\mathbb{Z}$ acts by isometries through translations in the $z$-direction of the Weyl-Papapetrou coordinate system. By taking quotients with different subgroups, solutions are obtained on DOCs having slice topology $M^3=S^3\setminus\left(S^1\times B^2 \cup_{i=1}^{i_0} B_i^3\right)$ where $i_0\geq 1$ is finite and $B^2$ denotes the 2-dimensional disk. We note that this is equivalent to the complement of $i_0$ disjoint 3-balls in a solid 3-torus.

In \cite{myers1987} Myers conjectured that analogues of these static vacuum solutions are possible in higher dimensions, perhaps with black holes of nontrivial topology. Here we show that indeed this is the case, by exhibiting a variety of such solutions. The notion of an infinite connected sum plays a role. Without further clarification this concept can be ambiguous, and so in Section \ref{kn} a precise definition is provided.

\begin{theorem}\label{kng}
There exist regular bi-axisymmetric solutions of the 5D static vacuum Einstein equations, balancing an infinite number of spherical $S^3$ and ring $S^1\times S^2$ black holes in various combinations. These solutions are space-periodic, asymptotically Kasner, and have DOC topology $\mathcal{M}^5=\mathbb{R}\times M^4$ where
\begin{equation}\label{ty}
M^4=
\#^{\infty}\> S^2\times S^2 \setminus\left[\cup_{i=1}^{i_0} B_i^4 \cup_{j=1}^{j_0} (S^2\times B^2)_j\right],
\end{equation}
and $i_0,j_0\in\{0,\infty\}$, with $0$ indicating the absence of the summands.
Furthermore, taking quotients with appropriate subgroups of $\mathbb{Z}$ gives rise to solutions on DOCs having slice topology
\begin{equation}\label{ty1}
M^4/\!\sim \!\!\!\!\text{ }=M_{k}^4 \setminus\left[(T^2\times B^2) \cup_{i=1}^{i_0}B_i^4 \cup_{j=1}^{j_0}(S^2\times B^2)_j\right],
\end{equation}
where $0\leq i_0,j_0<\infty$ and $M_{k}^4$ is either $S^4$ or $\operatorname \#^{k} S^2\times S^2$ depending on whether $k=0$ or $1\leq k<\infty$. The summands of \eqref{ty} and \eqref{ty1} are all mutually disjoint.
\end{theorem}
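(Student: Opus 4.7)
The plan is to realize each spacetime as the Weyl--Papapetrou reconstruction from an explicit rod diagram on the $z$-axis. Since the solutions are static, we may take the twist potentials $\omega\equiv 0$ and the shift functions $v^i\equiv 0$, so \eqref{equationshm} reduces to a harmonic-map equation for the $2\times 2$ positive definite matrix $F$ on $\mathbb{R}^3\setminus\Gamma$. For all rod structures appearing in the theorem statement (only $(1,0)$ and $(0,1)$ are needed) the matrix $F$ may be taken diagonal, and in this diagonal gauge the system decouples into a pair of axisymmetric Laplace equations for the Weyl-type potentials $\psi_1,\psi_2$ defined by $F=\operatorname{diag}(e^{2\psi_1},e^{2\psi_2})$, exactly as in the higher-dimensional analogue of the 4D Myers--Korotkin--Nicolai ansatz \cite{myers1987,korotkinnicolai}.

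Within one fundamental period $[0,L]\subset\mathbb{R}$ of the $z$-axis, I would fix a finite list of axis and horizon rods realizing the desired combinatorics: a horizon flanked by rods of structures $(1,0),(0,1)$ produces an $S^3$ cross-section, a horizon flanked by two rods of structure $(1,0)$ produces a ring $S^1\times S^2$, and for the soliton examples horizon rods are replaced by $(1,0),(0,1)$ corner junctions, each of which contributes an $S^2\times S^2$ bubble to $M^4$ in the sense of the classification of \cite{khurimatsumotoweinsteinyamada}; the admissibility condition on consecutive rod structures is immediate for all these patterns. The full infinite-period solution is then obtained by superposing the single-rod Weyl potentials (each the Newtonian potential of a uniformly charged rod on the $z$-axis) over the $\mathbb{Z}$-orbit of the fundamental period. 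Since this series diverges at each fixed point, it must be regularized by subtracting its logarithmic mean, as is standard for periodic Weyl potentials; convergence and smoothness away from $\Gamma$ then follow from exponential decay of the regularized kernel, and the $\rho\to\infty$ asymptotics give the logarithmic growth characteristic of Kasner ends, yielding the asserted Kasner asymptotic via \eqref{alpha}.

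The decisive step, and the expected main obstacle, is to verify the balancing condition: the logarithmic angle defect $b_l$ defined by \eqref{conicalformula} must vanish on every axis rod $\Gamma_l$. On rods lying on a $z\mapsto -z$ reflection axis of the chosen diagram, vanishing follows at once from the reflection symmetry together with periodicity, since the contributions of horizon rods on either side of $\Gamma_l$ cancel in pairs. When several rod types are combined in a single period, a finite number of free parameters remain, namely the rod lengths and horizon positions within the period, and I would establish by a continuity/implicit-function argument that the induced map from these parameters to the tuple of defects $(b_l)$ is surjective onto a neighborhood of the origin near a symmetric reference configuration, so that zero is attained by some admissible choice; this is the heart of the argument for mixed ring/sphere configurations.

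Finally, the topology of the DOC $\mathcal{M}^5=\mathbb{R}\times M^4$ is read off from the rod diagram using the classification of \cite{khurimatsumotoweinsteinyamada}, yielding the identification \eqref{ty}, and the quotient spacetimes \eqref{ty1} are obtained by passing to $\mathcal{M}^5/k\mathbb{Z}$ whose rod diagram is the image of the fundamental domain under the identification of endpoints, producing the factor $T^2\times B^2$. Regularity of the resulting metric at axis rods, corner points, and horizon rods is then checked from the explicit form of $F$ and $\alpha$ via the quadrature \eqref{alpha}, as will be carried out in Section \ref{regularity}.
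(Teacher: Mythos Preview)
Your overall architecture matches the paper's: diagonal $F=\operatorname{diag}(e^{u},e^{v})$ with $u,v$ harmonic, periodic superposition of rod potentials with logarithmic renormalization, Kasner asymptotics read off from the leading $\log\rho$ behavior, topology via \cite{khurimatsumotoweinsteinyamada}, and a separate regularity analysis at axes, corners, and poles. Where you diverge from the paper is the balancing step, and here there is both a gap and an unnecessary complication.

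First, the gap: reflection symmetry and periodicity do \emph{not} by themselves force $b_l=0$. Symmetry only guarantees that rods related by the reflection carry the \emph{same} defect; it says nothing about the common value. What the paper actually does is observe that, by \eqref{b1-b2}, adding constants to $u$, $v$, and $\alpha$ shifts the defects on $(1,0)$-rods and $(0,1)$-rods independently. One then designs each example so that a fundamental domain contains at most two or three axis rods up to symmetry, and the three additive freedoms suffice to kill all defects simultaneously. No tuning of rod lengths or horizon positions is needed, and no implicit-function argument enters; the rod lengths remain free parameters of the solution. Your proposed continuity argument on rod lengths would (if it works) produce only isolated balanced configurations, whereas the paper's method yields families.

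Second, you omit a step that is essential for this mechanism to make sense: one must verify that $\alpha$, obtained by quadrature from \eqref{alpha}, is itself periodic in $z$ with the same period as $u$ and $v$. This is not automatic. The paper checks in each example, using the reflection symmetry (or in Example~\ref{string-spheres} the interchange symmetry $u(\rho,z)-v(\rho,-z)=\text{const}$), that $\int \alpha_z\,dz$ over a period vanishes. Without this, adding a constant to $\alpha$ would not preserve periodicity, and the balancing argument collapses.
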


Each of the $B^4_i$ and $(S^2\times B^2)_j$ in \eqref{ty}, \eqref{ty1} represent $S^3$ and
$S^1\times S^2$ black hole cross-sections, respectively, whereas $T^2\times B^2$ represents an asymptotic end in a time slice having topology $\mathbb{R}\times T^3$. The statement of this theorem, and in particular (\ref{ty}) and (\ref{ty1}), is based on examples explicitly constructed in Section \ref{kn}. However, the exact fashion in which the excisions of $B^4$, $S^2 \times B^2$, and $T^2 \times B^2$  are performed, is intentionally left ambiguous for now.

The spacetime metrics of Theorem \ref{kng} have the following asymptotic structure in a modified Weyl-Papapetrou coordinate system
\begin{equation}
g\thicksim d\tau^2+\tau^{2p_1}dz^2 -\tau^{2p_2}dt^2
+\tau^{2p_3}(d\phi^1)^2 +\tau^{2p_4}(d\phi^2)^2,
\end{equation}
where the exponents satisfy the \textit{Kasner conditions} $\sum_{\ell} p_{\ell}=\sum_{\ell} p_{\ell}^2=1$.
For this reason, in analogy with the Myers-Korotkin-Nicolai metrics, the solutions of Theorem \ref{kng} are referred to as asymptotically Kasner. In addition, each black hole is balanced by ensuring that the total force exerted by all the bodies to either side along the axes sums to zero. Mathematically, we achieve this by showing that conical singularities do not form on the axes, due to the symmetry and periodicity of the resulting spacetime metrics. Interestingly, these constructions can be used to show that regular horizonless static vacuum (nontrivial) solutions exist in 5 dimensions. As discussed in the introduction, this is surprising since similar results do not hold in 4 dimensions and all previously known 5D soliton solutions require certain types of matter fields to support their equilibrium state.

\begin{theorem}\label{nobody}
There exist regular complete bi-axisymmetric solutions of the 5D static vacuum Einstein equations, which are devoid of black holes. These solutions are space-periodic, asymptotically Kasner, and have DOC topology $\mathcal{M}^5=\mathbb{R}\times M^4$ where
\begin{equation}\label{infinite}
M^4=\#^{\infty}\> S^2\times S^2.
\end{equation}
Furthermore, given a nonnegative integer $k$, a quotient may be taken by $k\mathbb{Z}$ to obtain solutions on DOCs having slice topology
\begin{equation} \label{time-slice}
M^4/\!\sim \!\!\!\!\text{ }=M_{k}^4 \setminus(T^2\times B^2),
\end{equation}
where $M_{k}^4$ is either $S^4$ or $\operatorname \#^{k} S^2\times S^2$ depending on whether $k =0$ or $1\leq k <\infty$.
\end{theorem}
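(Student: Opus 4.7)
The proof mirrors that of Theorem \ref{kng}, except that every horizon rod is removed so the rod data consists exclusively of admissible axis rods. My plan has four steps: (i) prescribe an infinite periodic sequence of axis rods alternating between structures $(1,0)$ and $(0,1)$, (ii) solve the associated static bi-axisymmetric vacuum harmonic map system, (iii) use discrete symmetries of the configuration to rule out conical singularities, and (iv) read off the asymptotic Kasner behavior and DOC topology from the rod diagram, then take quotients.

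For step (i), fix a period $2L > 0$ and set $\Gamma_l = \{(0,z) : (l-1)L \leq z \leq lL\}$ for $l \in \Z$, with rod structure $(1,0)$ when $l$ is odd and $(0,1)$ when $l$ is even. Adjacent rods satisfy the admissibility condition $\det \bigl(\begin{smallmatrix} 1 & 0 \\ 0 & 1 \end{smallmatrix}\bigr) = 1$, so no orbifold singularities can arise at corners. For step (ii), the static hypothesis forces $\omega_1 = \omega_2 = 0$, and we may take $F = \diag(e^{2u_1}, e^{2u_2})$; the harmonic map system \eqref{equationshm} then decouples into two axisymmetric scalar equations on $\R^3 \setminus \Gamma$, each $u_i$ carrying a $\log \rho$ singularity on precisely those rods where the $i$-th rotational Killing field degenerates and remaining smooth on the complementary rods. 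I would build $u_1$ and $u_2$ as infinite superpositions of the rod-potential Green's functions already employed in Section \ref{kn} for Theorem \ref{kng}, with convergence controlled by Myers-Korotkin-Nicolai-type estimates; the coefficient $\alpha$ then follows from line integration of \eqref{alpha}.

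For step (iii), the configuration admits two discrete symmetries: translation by $2L$ in $z$, and the reflection $z \mapsto -z$ (centered at a corner) composed with $\phi^1 \leftrightarrow \phi^2$. The first forces the logarithmic angle defect to be constant on each parity class of rods, and the second equates the two constants, reducing the infinite family of defects to a single number $b$. A global rescaling of the $\phi^i$ coordinates absorbs $b$ and eliminates the conical singularity on every rod simultaneously. For step (iv), the asymptotic Kasner form is extracted from the averaged growth of $u_1, u_2, \alpha$ along the $z$-axis after passing to modified Weyl-Papapetrou coordinates, exactly as in the 4D Myers-Korotkin-Nicolai analysis, and the DOC topology $\R \times \#^{\infty} S^2 \times S^2$ follows from the rod-to-topology dictionary of \cite{khurimatsumotoweinsteinyamada} applied to the alternating pattern. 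The quotient statement is then immediate: translation by $2kL$ generates an isometric $\Z$-action whose quotient compactifies the $z$-direction into a circle of length $2kL$ and produces the single $T^2 \times B^2$ end, yielding the slice topology \eqref{time-slice}.

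The main obstacle I anticipate is step (iii), specifically verifying that one global rescaling of $(\phi^1, \phi^2)$ kills the conical defect on every rod at once. The two-fold symmetry above cuts the infinite family of defect values down to a single number $b$, but one must still prove that $b$ is genuinely a function of the rod structure alone and not the rod position, which requires a careful analysis of the superposed potentials near each corner. This is precisely where the 5D bi-axisymmetric setting admits static vacuum solitons and the 4D setting does not: the $U(1)^2$ action provides exactly the rescaling freedom needed to balance an infinite periodic chain of axis rods without horizons or matter.
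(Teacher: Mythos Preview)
Your proposal is essentially correct and follows the same route as the paper (Example~\ref{string-corners} together with Section~\ref{regularity}). Two points of comparison are worth making.

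First, your balancing argument in step (iii) is more elaborate than necessary. The paper simply observes that a fundamental domain contains exactly two axis rods, one of each structure, and that adding independent constants to $u$ and $v$ (equivalently, rescaling $\phi^1$ and $\phi^2$ separately) sets $b_1=b_2=0$ directly via \eqref{b1-b2}. You do not need the reflection-plus-swap symmetry to equate the two defect values first; two free constants already handle two defects.

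Second, the ``main obstacle'' you flag---that the defect must depend only on rod structure and not rod position---is precisely the periodicity of $\alpha$, and the paper resolves it with a short parity argument you have not supplied. Since $u$ and $v$ are each even in $z$ about the midpoint $z=L/2$ of the fundamental domain, the integrand in the second line of \eqref{alpha1} is odd about that point, so $\int_{-L}^{L}\alpha_z\,dz=0$ and $\alpha$ inherits the $2L$-periodicity of $u,v$. This is the one concrete step missing from your sketch; once it is in place, periodicity of the full metric follows and your translation argument propagates the balancing to every rod. The asymptotics \eqref{asym2} and the topology computation then proceed exactly as you indicate.
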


These soliton solutions may be parlayed into new examples of complete Riemannian manifolds of nonnegative Ricci curvature in dimension 4, and zero
Ricci curvature in dimension 5, having arbitrarily large, as well as infinite second Betti number. Previous examples with positive Ricci curvature have been constructed by Sha and Yang \cite{ShaYang} on related topologies, via different methods.
In our setting, consider a static spacetime $(\mathbb{R}\times M^4, -w^2 dt^2 +\bar{g})$ satisfying the vacuum Einstein equations
\begin{equation}
\bar{R}_{ij}=\frac{\nabla_{ij} w}{w},\quad\quad\quad\quad \Delta_{\bar{g}} w=0.
\end{equation}
The conformal metric $\tilde{g}=w\bar{g}$ then has nonnegative Ricci curvature. To see this, recall how the Ricci tensor changes under a conformal deformation. Namely, if $\tilde{g}=e^{2\varphi}\bar{g}$ then
\begin{equation}
\tilde{R}_{ij}=\bar{R}_{ij}-2\left(\nabla_{ij}\varphi -\varphi_i \varphi_j\right)
-\left(\Delta_{\bar{g}}\varphi+2|\nabla \varphi|^2\right)\bar{g}_{ij}.
\end{equation}
Setting $\varphi=\frac{1}{2}\log w$ produces
\begin{equation}
\tilde{R}_{ij}=\frac{3}{2}(\log w)_i (\log w)_j.
\end{equation}
This may then be applied to the solitons of the previous section, where the function $w$ is smooth, space-periodic, and strictly positive. The conformally deformed time slice metrics of the solitons then provide examples of nonnegative Ricci curvature on the topologies $\operatorname \#^{\infty} S^2\times S^2$, and $\operatorname \#^{k} S^2\times S^2 \setminus (T^2\times B^2)$ for any $k\geq 0$.
Lastly, consider the soliton spacetimes and perform a Wick rotation $t\mapsto \sqrt{-1}s$, where identifications are then instituted along the $s$-coordinate so that it may be viewed as parameterizing a circle. This transformation preserves the Ricci flat condition. We then obtain complete Ricci flat metrics on $S^1\times \operatorname \#^{\infty} S^2\times S^2$, and $S^1\times \left[\operatorname \#^{k} S^2\times S^2 \setminus (T^2\times B^2)\right]$ for any $k\geq 0$. The following corollary summarizes this discussion.

\begin{cor}\label{corollary11}
$(i)$ There exists a smooth complete bi-axisymmetric Riemannian metric of nonnegative Ricci curvature on the following manifold of infinite topological type $\operatorname \#^{\infty} S^2\times S^2$.
Furthermore the metric is periodic, so that taking quotients as in Theorem \ref{nobody} produces metrics of the same type on
\begin{equation}
\operatorname \#^{k} S^2\times S^2 \setminus (T^2\times B^2),
\end{equation}
for any $k\geq 0$.

$(ii)$ There exists a smooth complete tri-axisymmetric Riemannian metric of zero Ricci curvature on the following manifold of infinite topological type
$S^1\times \operatorname\#^{\infty} S^2\times S^2$.
Furthermore the metric is periodic, so that taking quotients as in Theorem \ref{nobody} produces metrics of the same type on
\begin{equation}
S^1\times \left[\operatorname \#^{k} S^2\times S^2 \setminus (T^2\times B^2)\right],
\end{equation}
for any $k\geq 0$.
\end{cor}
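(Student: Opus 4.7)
The plan is to construct both families of Riemannian metrics directly from the static vacuum solitons supplied by Theorem \ref{nobody}. Fix such a soliton $(\mathbb{R}\times M^4,\,-w^2\,dt^2+\bar g)$ and record two essential inherited properties: because no horizons are present, the lapse $w$ is a smooth strictly positive harmonic function on $(M^4,\bar g)$, and because the construction is $z$-periodic, both $w$ and $\bar g$ are invariant under the translation group $\mathbb{Z}$ (or the subgroup $k\mathbb{Z}$ after quotienting), which yields uniform positive upper and lower bounds for $w$ on every fundamental domain.

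For part $(i)$, I would introduce the conformally rescaled metric $\tilde g=w\bar g$ on $M^4$. Writing $\tilde g=e^{2\varphi}\bar g$ with $\varphi=\tfrac12\log w$ and substituting into the four-dimensional conformal change formula already displayed in the excerpt, the harmonicity $\Delta_{\bar g}w=0$ forces the coefficient of $\bar g_{ij}$ to vanish while the static vacuum identity $\bar R_{ij}=w^{-1}\nabla_{ij}w$ cancels the Hessian contribution. What remains is
\begin{equation*}
\tilde R_{ij}=\tfrac{3}{2}(\log w)_i(\log w)_j,
\end{equation*}
a manifestly positive semidefinite symmetric rank-one tensor, so $\tilde g$ has nonnegative Ricci curvature. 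Smoothness of $\tilde g$ is immediate from $w>0$, while its completeness follows from that of $\bar g$ together with the periodic bi-Lipschitz comparison $c\,\bar g\le\tilde g\le C\,\bar g$ afforded by the uniform two-sided bounds on $w$. The bi-axisymmetry is preserved because $w$ is itself bi-axisymmetric.

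For part $(ii)$, I would Wick rotate $t\mapsto\sqrt{-1}\,s$ to obtain the Riemannian warped product $\widehat g=w^2\,ds^2+\bar g$ on $\mathbb{R}\times M^4$. A direct warped-product computation shows that $\mathrm{Ric}(\widehat g)=0$ is equivalent to the same pair of equations $\Delta_{\bar g}w=0$ and $\bar R_{ij}=w^{-1}\nabla_{ij}w$ already satisfied, so $\widehat g$ is Ricci flat. Identifying $s\sim s+L$ for any period $L>0$ produces a smooth complete Ricci flat metric on $S^1\times M^4$ with an additional Killing field $\partial_s$ commuting with the two pre-existing rotational symmetries, yielding the claimed tri-axisymmetric structure; the quotient along the $z$-direction is taken exactly as in Theorem \ref{nobody}.

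The only genuinely subtle input in the whole argument is the strict positivity together with the two-sided boundedness of $w$, and both are consequences of the horizonless character of the solitons and their space-periodicity, features already built into Theorem \ref{nobody}. Beyond this point the proof is a straightforward conformal-change computation in part $(i)$ and a direct warped-product Ricci computation in part $(ii)$; neither construction affects the underlying diffeomorphism type, so the topological claims and the statements about Betti numbers are inherited intact from Theorem \ref{nobody}.
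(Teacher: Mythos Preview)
Your proposal is correct and follows essentially the same route as the paper: the conformal rescaling $\tilde g = w\bar g$ together with the static vacuum equations yields $\tilde R_{ij}=\tfrac{3}{2}(\log w)_i(\log w)_j\ge 0$ for part $(i)$, and the Wick rotation $t\mapsto\sqrt{-1}\,s$ followed by an $S^1$ identification gives the Ricci-flat metrics for part $(ii)$. The only addition you supply beyond the paper's argument is the explicit completeness justification via the periodic two-sided bounds on $w$ and the resulting bi-Lipschitz equivalence $c\,\bar g\le\tilde g\le C\,\bar g$; the paper leaves this implicit, so your version is slightly more detailed but not a different approach.
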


\section{Space-Periodic Solutions}
\label{kn}

In this section we will present a series of examples illustrating Theorems \ref{kng} and \ref{nobody}.
All of the examples presented below are 5-dimensional bi-axisymmetric static vacuum solutions expressed in Weyl-Papapetrou coordinates \eqref{spacetimem}.
The metric along the torus fibers will be given by a diagonal matrix function
$F = (f_{ij})$, namely
\begin{equation}\label{F11}
	F=\begin{pmatrix}
	e^u & 0 \\
	0 & e^v
	\end{pmatrix},
\end{equation}
where $u$ and $v$ are harmonic functions on $\mathbb{R}^3\setminus \Gamma$ so that the harmonic map equations \eqref{equationshm} are satisfied (note that $|\omega|=0$ in the static setting). This ansatz implies that the axes can only exhibit the two rod structures $(1,0)$ or $(0,1)$. For an axis rod $\Gamma_1$ having the rod structure $(1,0)$, and an axis rod $\Gamma_2$ having the rod structure $(0,1)$ we find that the corresponding logarithmic angle defects \eqref{conicalformula} are given by
\begin{equation}\label{b1-b2}
b_1 =  \lim_{\rho \rightarrow 0}  \left(\log \rho +\alpha-\frac{1}{2}u\right)\quad\text{ on }\quad\Gamma_1, \quad \quad\quad  b_2 = \lim_{\rho \rightarrow 0} \left(\log \rho+\alpha-\frac{1}{2}v\right)\quad\text{ on }\quad\Gamma_2 .
\end{equation}
In the examples below, the functions $u$ and $v$ will be periodic in the $z$-direction by construction. The function $\alpha$, obtained by quadrature from \eqref{alpha}, will then be shown to also possess the same periodicity, yielding the desire space-periodic static vacuum spacetimes. Geometric regularity of the solutions is established by eliminating the possibility of conical singularities along the axes. This is achieved by utilizing the three degrees of freedom obtained by adding constants to $u$, $v$, and $\alpha$.

Let $z_a=z-a$ and $r_a=\sqrt{\rho^2+z_a^2}$. In what follows, the potential
\begin{equation}
U_I = \log(r_a-z_a) - \log(r_b-z_b)
\end{equation}
for a uniform charge distribution along a finite interval $I=[a,b]$ of the $z$-axis, will be utilized repeatedly. Observe that this function satisfies the properties
\begin{equation}\label{kilo}
U_I<0,\quad\quad\quad U_I \sim 2\log\rho\quad\text{ near }\quad I,\quad\quad\quad
U_I=(a-b)/r + O(r^2)\quad\text{ as }\quad r\to\infty.
\end{equation}
Furthermore, due to the diagonal structure of $F$, the equations \eqref{alpha} defining $\alpha$ simplify to
\begin{align}\label{alpha1}
\begin{split}
\alpha_\rho =&
\frac\rho4 \left[ u_\rho^2 - u_z^2 + v_\rho^2 - v_z^2 + u_\rho v_\rho - u_z v_z - \frac2\rho (u_\rho+v_\rho) \right], \\
\alpha_z =& \frac\rho4 \left[ 2u_\rho u_z + 2v_\rho v_z + u_\rho v_z + u_z v_\rho - \frac2\rho(u_z+v_z) \right].
\end{split}
\end{align}

\begin{example}[String of black rings]	\label{trivial}

\begin{figure}
	\includegraphics[width=.9\textwidth]{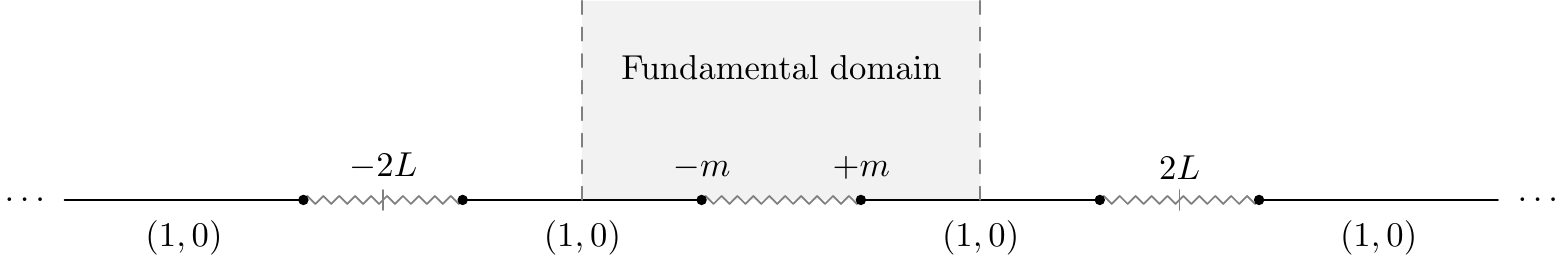}
	\caption{Rod structure for Example \ref{trivial}. Here, as in future rod diagrams, the $z$-axis is drawn horizontally and the zigzag lines indicate horizon rods. The $z$-coordinates of the end points of the horizon rod in the fundamental domain are indicated above the $z$-axis, while the rod structures of the axis rods are indicated below.}
	\label{black-rings}
\end{figure}

We begin with a direct generalization of the Myers-Korotkin-Nicolai construction. It represents an infinite string of identical black rings equally spaced along the $z$-axis. For each $j\in\Z$ consider axis rods $\Gamma_j=\bigl[(2j-2)L+m,2jL-m\bigr]$
with rod structure $(1,0)$ along the $z$-axis, and horizon rods
$I_j=\bigl[2jL-m,2jL+m\bigr]$, see Figure~\ref{black-rings}. Set
\begin{equation}\label{sum}
u = \lim_{n\rightarrow\infty} \left(\sum_{j=-n}^n U_{\Gamma_j} + 2\left(1-\frac{m}{L}\right)\log n\right),\quad\quad\quad
v=0,
\end{equation}
and note that for $(\rho,z)$ fixed and large $j$ we have by \eqref{kilo} that $U_{\Gamma_j}\sim \frac{m-L}{|j|L}$, so that the additional term $2(1-\frac{m}{L})\log n$ renormalizes the divergent series of harmonic functions to yield a finite harmonic limit away from $\Gamma$. Near each $\Gamma_j$ the asymptotics for this function are $u\sim 2\log\rho$, so that the desired rod structure is exhibited by $F$ in \eqref{F11}.
The functions $u$ and $v$ are $2L$-periodic with a fundamental domain $|z|\leq L$.
The fundamental domain contains a single horizon rod $[-m,m]$ where $m<L$, representing horizon topology $S^1\times S^2$. Next observe that since
$u$ and $v$ are even in $z$ with respect to the origin, the right-hand side of the second equation of \eqref{alpha1} is odd in $z$ which yields
\begin{equation}\label{periodic}
\int_{-L}^L \alpha_z\, dz = 0.
\end{equation}
It follows that $\alpha$ is periodic in $z$ with period $2L$. Moreover, by periodicity of the resulting spacetime metric, if the angle defect of a single axis rod vanishes, it will vanish for all. In light of \eqref{b1-b2}, we may then add a constant to $\alpha$ if necessary to achieve this balancing at all axis rods.

In Section \ref{regularity} it will be shown that this solution is asymptotically Kasner. In order to aid with the proof of this statement, here we compute the asymptotics of the harmonic function
\begin{equation}\label{asym0}
u\sim 2\left(1-\frac{m}{L}\right)\log\rho\quad \text{ as }\quad \rho\to\infty.
\end{equation}
To see this, first note that $u$ may be expanded in terms of modified Bessel functions of order zero. It may then be shown that the coefficients of those that have exponential growth must vanish, since basic estimates using the definition \eqref{sum} imply that exponential growth is not possible. From the expansion, we then have that the leading order term is a multiple of $\log\rho$. Consider now the average of $u$ in the $z$-direction
\begin{equation}
\bar{u}=\frac{1}{2L}\int_{-L}^{L} u dz.
\end{equation}
This is a function of the single variable $\rho$ and satisfies
\begin{equation}
\partial_{\rho}^2 \bar{u}+\frac{1}{\rho}\partial_{\rho}\bar{u}=0,
\end{equation}
and so is interpreted as a radial harmonic function in 2-dimensions.
It follows that $\bar{u}=\bar{a}+\bar{b}\log \rho$, for some constants $\bar{a}$, $\bar{b}$. Taking the limit $\rho\to 0$, and using that $u\sim 2\log \rho$ upon approach to the axis rods $\Gamma_j$, we obtain $\rho\partial_{\rho}\bar{u}\rightarrow 2(1-\frac{m}{L})$ which yields the desired conclusion. It then follows from \eqref{alpha1} that
\begin{equation}\label{asym1}
\alpha\sim -\frac{m}{L}\left(1-\frac{m}{L}\right)\log\rho \quad\text{ as }\quad\rho\to\infty.
\end{equation}

Following \cite{khurimatsumotoweinsteinyamada} we may compute the topology of the domain of outer communication. First note that by filling in each horizon rod with an auxiliary axis rod having rod structure $(0,1)$, we obtain an orbit space with an infinite string of alternating $(1,0)$ and $(0,1)$ rods. As will be described in Example \ref{string-corners} below, this is the orbit space for an infinite connected sum $\operatorname \#^{\infty} S^2 \times S^2$. Since each horizon fill-in corresponds to an $S^2 \times B^2$, it follows that the DOC slice topology is given by
\begin{equation}
M^4=
\#^{\infty} S^2\times S^2 \setminus\left[\cup_{i=1}^{\infty} (S^2\times B^2)_i\right].
\end{equation}
Furthermore, we may use periodicity to take quotients and obtain DOCs having the following slice topologies. If the quotient is associated with a single fundamental domain, as in Figure \ref{black-rings}, then upon identifying sides (with the horizon filled in) the orbit space becomes an annulus with outer boundary consisting of two axis rods of rod structure $(1,0)$ and $(0,1)$. This represents $S^4\setminus (T^2 \times B^2)$. If the fundamental domain contains two horizons, then the corresponding slice topology, with horizons filled in, is $S^2\times S^2 \setminus (T^2 \times B^2)$. Increasing the number of horizons in a fundamental domain increases the number of factors in the connected sum of $S^2 \times S^2$ with itself. Finally, removing the horizon fill-ins leads to DOC slice topologies
\begin{equation}
M^4/\!\sim \!\!\!\!\text{ }=M_{k}^4 \setminus\left[(T^2\times B^2) \cup_{i=1}^{k+1}(S^2\times B^2)_i\right],
\end{equation}
where $M_{k}^4$ is $S^4$ if $k =0$, and $\operatorname \#^{k} S^2\times S^2$ if $k \geq 1$.
\end{example}

\begin{remark}
Since $v=0$, it is clear that the space-periodic solution constructed in Example 1 is simply the product of the 4D Myers-Korotkin-Nicolai solutions \cite{myers1987,korotkinnicolai} with a circle of constant length.
\end{remark}

\begin{figure}
	\includegraphics[width=.9\textwidth]{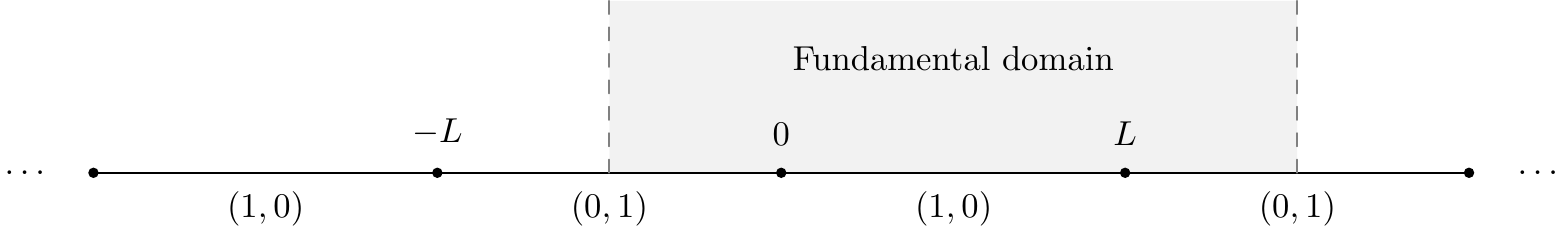}
	\caption{Rod structure for Example \ref{string-corners}.}
	\label{rod-corners}
\end{figure}


\begin{example}[String of corners, the soliton]	\label{string-corners}

To construct the rod structure place a corner at $z=jL$ for each $j\in\Z$ along the $z$-axis. The rod structure will be $(1,0)$ on $\Gamma_{2j}=[2jL,(2j+1)L]$ and $(0,1)$ on $\Gamma_{2j+1}=[(2j+1)L,(2j+2)L]$ for all $j\in\Z$, see Figure \ref{rod-corners}. In order to realize this set of rod structures with the fiber metric $F$ of \eqref{F11}, define harmonic functions
\begin{equation}
u = \lim_{n\rightarrow\infty} \left(\sum_{j=-n}^n U_{\Gamma_{2j}} + \log n\right), \qquad
v = \lim_{n\rightarrow\infty} \left(\sum_{j=-n}^n U_{\Gamma_{2j+1}} +\log n\right).
\end{equation}
As in Example \ref{trivial}, the $\log n$ term renormalizes the sum to ensure convergence. Observe that these functions are periodic in the $z$-direction with period $2L$.
Within the fundamental domain of Figure \ref{rod-corners}, there are two axis rods. It suffices to resolve the conical singularity on these two rods, since this will then be propagated to the remaining rods by periodicity. In order to achieve this, we can arrange for $b_1=b_2=0$ in \eqref{b1-b2} simply by adding appropriate constants to $u$ and $v$. Furthermore, note that the functions $u$ and $v$ are even with respect to the line $z=L/2$, and hence the analogue of \eqref{periodic}
holds here, showing that the function $\alpha$ is $2L$-periodic in the $z$-direction. It follows that we obtain a regular space-periodic solution of the static vacuum Einstein equations without horizons.


Let us now analyze the topology of the DOC time slice $M^4$.
Consider the discrete isometry group $\Z$ acting on $M^4$ by $n: z\mapsto z+2nL$. This action is clearly properly discontinuous, hence the quotient $M^4/\!\sim$ is a Riemannian manifold. The fundamental domain of the covering map $M^4\to M^4/\!\sim$ corresponds in the $\rho z$-half-plane to the horizontal strip $-L/2 \leq z \leq 3L/2$, having with two corners. This is topologically a closed disk with two corners on the outer boundary, with a smaller central disk removed. The closed disk with two corners is the orbit space of $S^4$ under the $T^2$ action. Therefore the topology of the quotient space is $M^4/\!\sim\!\!\!\text{ }=S^4\setminus T^2\times B^2$. This is the case $k=0$ in \eqref{time-slice}.

\begin{figure}
	\includegraphics[width=.6\textwidth]{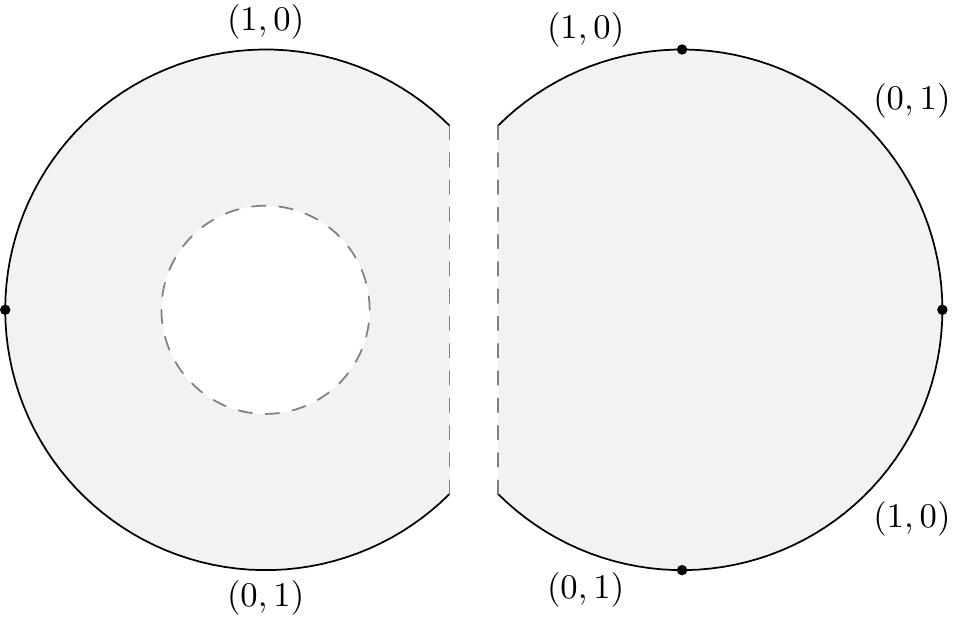}
	\caption{Rod diagram for the connected sum of $S^4\setminus T^2\times B^2$ and $S^2\times S^2$.}
	\label{connect-sum}
\end{figure}

We can also take a quotient by any subgroup $(k+1)\Z$ with $k\geq1$. The fundamental domain will have $2k+2$ corners. Next, it will be shown that the act of adding 2 corners corresponds topologically to taking the connected sum with $S^2\times S^2$. To see this, first note that the rod diagram of $S^2\times S^2$ is a disk with 4 corners, with alternating $(1,0)$, $(0,1)$ rod structures. Moreover, the connected sum operation between two 4D toric manifolds may be described within the orbit space as follows. Namely, excise a corner from each rod diagram so that the cutting curves lift to $S^3$ boundaries and then glue along these cuts, see Figure \ref{connect-sum}. Performing this operation on the two rod diagrams of the figure yields a disk with 4 corners, with a central disk removed. This orbit space arises from the fundamental domain of the group action with $k=1$. Clearly this process can be repeated indefinitely by induction, and corresponds to \eqref{time-slice} with $k\geq1$.

To obtain \eqref{infinite} we need to clarify the definition of an infinite connected sum, since there seems to be some ambiguity on this topic in the literature. Consider an increasing sequence of Riemannian manifolds $N_1\subset N_2\subset \cdots$. Then $N=\cup_j  N_j$, endowed with the natural topology, namely the smallest topology which makes every inclusion $N_j\hookrightarrow N$ continuous, is a Riemannian manifold. This construction may be applied to infinite connected sums as follows. Let $M_i$ be manifolds of the same dimension, and set
\begin{equation}
N_j=\left(\operatorname \#_{i=1}^j M_i\right)\setminus B,
\end{equation}
where $B$ is a ball. Then $N_j$ gives rise to an increasing sequence and we can define $\operatorname \#_{i=1}^\infty M_i = N$. Note that with this definition, the infinite connected sum of $S^2$'s is a plane. To get the punctured plane, we would have to carry out this process twice.

\begin{figure}
	\includegraphics[width=.7\textwidth]{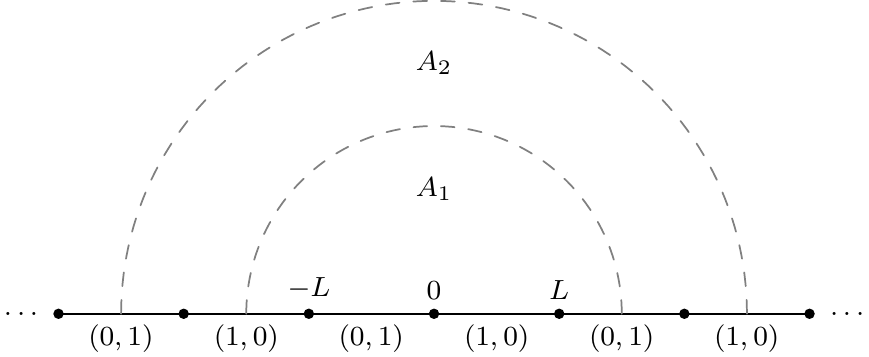}
	\caption{Rod diagram for $\operatorname\#^\infty S^2\times S^2$.}
	\label{inf-connect-sum}
\end{figure}

Now consider the original time slice $M^4$ and its rod diagram, the $\rho z$-half-plane with corners at integer multiples of $L$ on the $z$-axis. Draw concentric half circles centered at the origin with radii $1/2+j$, $j=1,2,\ldots$. These cut the half-plane into infinitely many regions $A_j$, $j=1,2,\ldots$, as in Figure~\ref{inf-connect-sum}. The first one, $A_1$, has three corners and its boundary corresponds to an $S^3$. Therefore it represents $(S^2\times S^2)\setminus B$. Each of the next $A_j$'s, $j\geq2$ has two corners and two $S^3$ boundaries. Hence, as described above $N_j=(\operatorname\#^j S^2\times S^2)\setminus B$. We conclude that the whole half-plane is the quotient space of $\operatorname\#^\infty S^2\times S^2$ under the $T^2$ action.

The space-periodic soliton solution constructed here is asymptotically Kasner. This will be established in the next section. Here, as in Example \ref{trivial}, we simply record the asymptotics of the coefficient functions that appear in the spacetime metric \eqref{spacetimem}. Namely, following the arguments given in the previous example shows that
\begin{equation}\label{asym2}
u\sim\log\rho,\quad\quad\quad v\sim\log\rho,\quad\quad\quad \alpha\sim -\frac{1}{4}\log\rho\quad\quad\text{ as }\quad\quad \rho\rightarrow\infty.
\end{equation}
\end{example}

\begin{example}[String of spheres]
\label{string-spheres}

Consider now the rod structure of Figure \ref{spheres}. All horizon rods (zigzag lines) are of length $2m$ and border two axis rods on both sides, with each axis rod having length $L-m$. The period is then $2L$, and a fundamental domain is shown shaded in grey. The given rod structures imply that the horizon rods represent $S^3$ horizon cross-sections. Thus, this rod diagram described an infinite collection of spherical horizons evenly spaced along the $z$-axis. As in Examples \ref{trivial} and \ref{string-corners}, we can define appropriate $2L$-periodic harmonic functions $u$ and $v$, as renormalized series involving the functions $U_{\Gamma_j}$, in order to realize the desired rod structures. Observe that within the fundamental domain there are precisely two axis rods, one with rod structure $(1,0)$ and the other with rod structure $(0,1)$. Therefore we may use the free constants available within the definitions of $u$ and $v$, in order to resolve the possible conical singularities on these axes by ensuring that $b_1=b_2=0$ in \eqref{b1-b2}.

We now show that $\alpha$, given by \eqref{alpha1}, is $2L$-periodic in the $z$-direction. To do this note that by construction of $u$ and $v$, the following reflection symmetry about the line $z=0$ is manifest
\begin{equation}\label{symmetry}
u(\rho,z) - v(\rho,-z)=c,
\end{equation}
for some constant $c$. It follows that
\begin{equation}
u_\rho(\rho,z) = v_\rho(\rho,-z), \qquad u_z(\rho,z) = -v_z(\rho,-z).
\end{equation}
Thus we have
\begin{equation}
\int_{-L}^0 u_\rho u_z\, dz + \int_0^L v_\rho v_z\, dz = 0, \qquad
\int_{-L}^0 v_\rho v_z\, dz + \int_0^L u_\rho u_z\, dz = 0
\end{equation}
\begin{figure}
	\includegraphics[width=.9\textwidth]{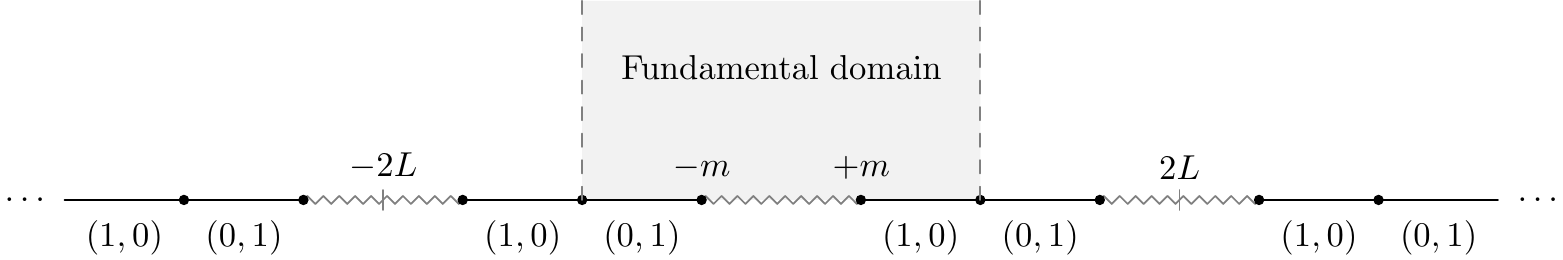}
	\caption{Rod structure for Example~\ref{string-spheres}.}
	\label{spheres}
\end{figure}
Adding these together produces
\begin{equation}
\int_{-L}^L \bigl( u_\rho u_z + v_\rho v_z \bigr) \, dz = 0.
\end{equation}
Moreover, \eqref{symmetry} implies that $u+v$ is even in $z$, and hence $u_\rho+v_\rho$ is even while $u_z + v_z$ is odd. Substituting into \eqref{alpha1} yields
\begin{equation}
\int_{-L}^L \alpha_z\, dz =
\frac\rho4 \int_{-L}^L  \left[ (u_\rho+v_\rho)(u_z+v_z) + \bigl( u_\rho u_z + v_\rho v_z \bigr)
- \frac2\rho (u_z + v_z) \right]\, dz = 0,
\end{equation}
which gives the desired periodicity in the $z$-direction for $\alpha$. The spacetime metric \eqref{spacetimem} arising from $u$, $v$, and $\alpha$ is then space-periodic. Furthermore,
the following asymptotics hold
\begin{equation}\label{jfhg}
u\sim\left(1-\frac{m}{L}\right)\log\rho,\quad v\sim\left(1-\frac{m}{L}\right)\log\rho,\quad \alpha\sim -\frac{1}{4}\left(1+\frac{3m}{L}\right)\left(1-\frac{m}{L}\right)\log\rho
\quad\text{ as }\quad \rho\rightarrow\infty.
\end{equation}
These will be used in the next section to show that the metric is asymptotically Kasner.

To analyze the time slice topology of the domain of outer communication, observe that each horizon rod may be filled-in by extending the two neighboring axis rods and adding a corner in the middle. In the total space this is equivalent to filling-in the horizon with a 4-ball $B^4$. The resulting rod structure is an infinite sequence of alternating $(1,0)$, $(0,1)$ rods, which was shown in the previous example to correspond to an infinite connected sum of $S^2 \times S^2$. Therefore we find that the time slice topology is
\begin{equation}\label{timeslicetop0}
M^4 =\#^{\infty} \> S^2 \times S^2 \setminus \left(\cup_{i=1}^{\infty}B^4_i \right).
\end{equation}
Furthermore, using the periodicity we may take quotients by $(k+1)\mathbb{Z}$
to obtain solutions having slice topology
\begin{equation}\label{7uio}
M^4/\!\sim \!\!\!\!\text{ }=M_{k}^4 \setminus\left[(T^2\times B^2) \cup_{i=1}^{k+1}B_i^4 \right],
\end{equation}
where $M_{k}^4$ is either $S^4$ or $\operatorname \#^{k} S^2\times S^2$ depending on whether $k=0$ or $1\leq k<\infty$.
\end{example}

Our last two examples are presented very briefly since by now the technique is clear. They are very similar to the examples above, and are in fact somewhat simpler since they contain no corners as in Example \ref{trivial}, and do not make use of the more complex symmetry used in Example \ref{string-spheres}.

\begin{example}[String of double spheres] \label{string-double-spheres}

The rod diagram for this example is very similar to the one for Example \ref{trivial}. The only difference is that the rod structure of every other axis rod is changed to $(0,1)$. Consequently the horizons now have topology $S^3$, and the fundamental domain is  $-L\leq z\leq 3L$, see Figure \ref{double-spheres}. There are two axis rods in each fundamental domain, and hence the conical singularities can be cured by choosing the free constants within the definition of $u$ and $v$ appropriately. Furthermore, the functions $u$ and $v$ are clearly
symmetric in the $z$-direction across the line $z=L$. As in previous examples this implies that \begin{equation}
\int_{-L}^{3L}\alpha_{z} dz=0,
\end{equation}
so that $\alpha$ is periodic with period, $4L$, as exhibited by $u$ and $v$. To compute the DOC time slice topology, we can fill-in each horizon rod by extending the neighboring rods and adding a corner in the middle. This leads to \eqref{timeslicetop0}. Moreover, taking a quotient by $(k+1)\mathbb{Z}$ produces \eqref{7uio} with $2(k+1)$ balls removed instead of $k+1$. Lastly, the asymptotics of the relevant functions agree with those of \eqref{jfhg}, which ensures that the solution is asymptotically Kasner.

\begin{figure}
\includegraphics[width=.9\textwidth]{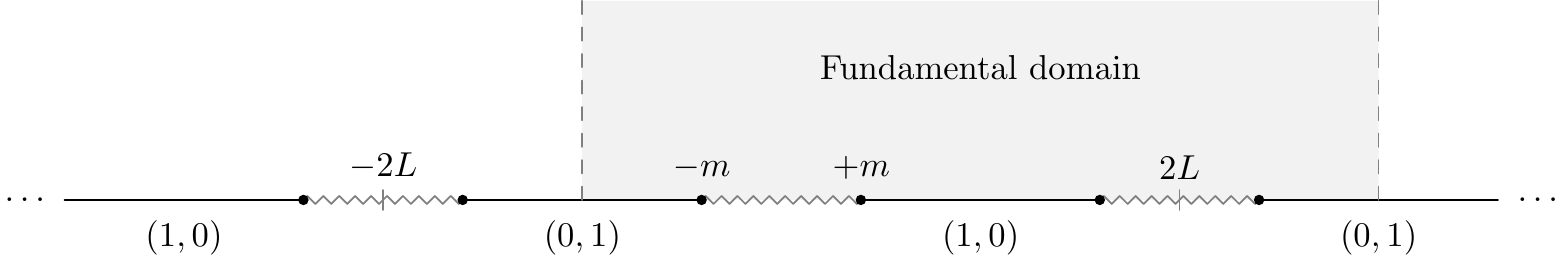}
\caption{Rod structure for Example~\ref{string-double-spheres}.}
\label{double-spheres}
\end{figure}
\end{example}

\begin{example}[String of black Saturns] \label{string-black-saturns}

Consider the rod diagram of Figure \ref{black-saturn}. The fundamental domain is $|z|\leq 3L$,
so that the harmonic functions $u$ and $v$ are $6L$-periodic in the $z$-direction. Furthermore, these functions are clearly symmetric with respect to the line $z=0$, and hence $\alpha$ is periodic with the same period. There are three axis rods in a fundamental domain. One can cure the conical singularities in two of these using the free constants available in the definitions of $u$, $v$, and $\alpha$, and use the symmetry to conclude that the last axis rod is also regular. Horizon rods surrounded by $(1,0)$ and $(0,1)$ axes correspond to $S^3$'s, while horizon rods surrounded by $(0,1)$ axes correspond to $S^1\times S^2$'s. Thus, we obtain an infinite collection of black Saturns, a spherical horizon together with a ring horizon, aligned along the $z$-axis. The spherical horizons may be filled-in with 4-balls and the ring horizons may be filled-in with $B^2\times T^2$'s, resulting in an infinite connected sum of $S^2\times S^2$'s. Hence
\begin{equation}
M^4=
\#^{\infty}\> S^2\times S^2 \setminus\left[\cup_{i=1}^{\infty} B_i^4 \cup_{j=1}^{\infty} (S^2\times B^2)_j\right].
\end{equation}
Moreover, taking a quotient by $(k+1)\mathbb{Z}$ yields
\begin{equation}
M^4/\!\sim \!\!\!\!\text{ }=\#^{k+1} S^2\times S^2 \setminus\left[(T^2\times B^2) \cup_{i=1}^{2(k+1)}B_i^4 \cup_{j=1}^{k+1}(S^2\times B^2)_j\right].
\end{equation}
Lastly the asymptotics are given by
\begin{equation}\label{jfhg11}
u\sim\frac{2}{3}\left(1-\frac{m}{L}\right)\log\rho,\quad v\sim\frac{4}{3}\left(1-\frac{m}{L}\right)\log\rho,\quad \alpha\sim -\frac{1}{9}\left(2+\frac{7m}{L}\right)\left(1-\frac{m}{L}\right)\log\rho
\text{ }\text{ as }\text{ } \rho\rightarrow\infty,
\end{equation}
which guarantees that the solution is asymptotically Kasner.

\end{example}

\begin{remark}
Clearly, many more such examples can be construed. It would be interesting to classify all possible examples in 5D. Furthermore, similar and even more convoluted examples should be possible in higher dimensions. In order to use the same techniques as presented here, for the higher dimensional setting the axisymmetry should be such that the orbit space is still 2-dimensional.
\end{remark}

\begin{figure}
	\includegraphics[width=.9\textwidth]{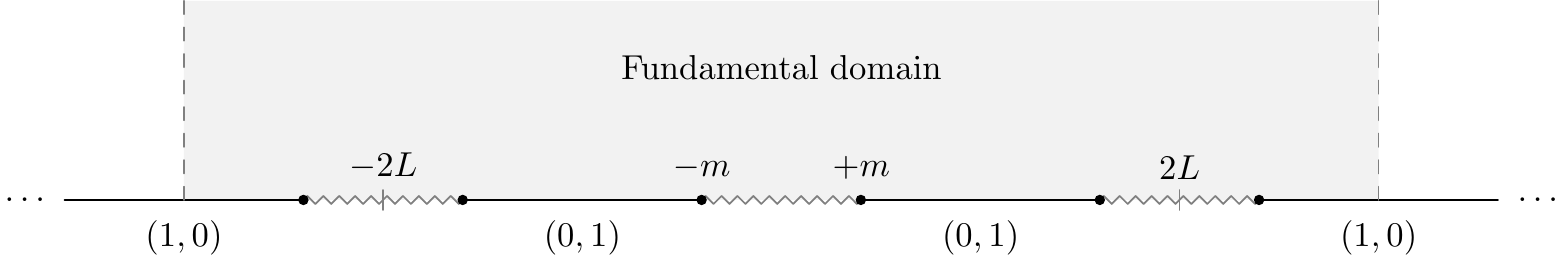}
	\caption{Rod structure for Example~\ref{string-black-saturns}.}
	\label{black-saturn}
\end{figure}

\section{Regularity and Asymptotics}
\label{regularity}

In this section, we show that the solutions we construct in Section~\ref{kn} are geometrically regular. This means that the apparent metric singularities that occur on the axes, corners, and poles are simply coordinate singularities, so that after an appropriate change of coordinates the metric is smooth. This will follow from the balancing of conical singularities on the axes. Furthermore, in this section we will show that the solutions constructed exhibit Kasner asymptotics. In addition, the soliton example will be transformed into Riemannian manifolds of nonnegative and zero Ricci curvature, having arbitrarily large second Betti number.

Recall that the relevant spacetime metrics take the form
\begin{equation}\label{metric00}
g=- e^{-u-v} \rho^2 dt^2 +e^{2\alpha}(d\rho^2 +dz^2)+e^u d\phi^2 +e^v d\psi^2,
\end{equation}
on the DOC $\mathbb{R}\times M^4$ where $M^4$ has topology given by Theorems \ref{kng} and \ref{nobody}. Here the two rotational Killing fields are $\partial_{\phi}$ and $\partial_{\psi}$.

\subsection{Regularity}

The metric \eqref{metric00} is clearly static and bi-axisymmetric, and since
$u$ and $v$ are harmonic with $\alpha$ satisfying \eqref{alpha1} this is a solution of the vacuum Einstein equations. Since $u$ and $v$ remain bounded on any horizon rod $\Gamma_h$, we have that $\partial_t$ becomes null there, and hence each such rod represents a Killing horizon. To examine regularity on the interior of horizon rods, observe that integrating \eqref{alpha1} implies $\alpha=-\frac{1}{2}(u+v)-\log c$ on $\Gamma_h$ for a constant $c>0$ that coincides with the horizon surface gravity \cite[Appendix]{hollands2011}. It follows that
\begin{equation}
- e^{-u-v}\rho^2 dt^2 +e^{2\alpha}(d\rho^2 +dz^2)=c^{-2}e^{-u-v}\left(-c^2\rho^2 dt^2 +d\rho^2 +dz^2\right).
\end{equation}
Introducing Kruskal-like coordinates $X,Y>0$ defined by $XY=\rho^2$, $X/Y =e^{2ct}$ then produces
\begin{equation}
g=c^{-2}e^{-u-v}\left(dX dY +dz^2 \right)+e^u d\phi^2 +e^v d\psi^2
\end{equation}
on the interior of $\Gamma_h$. It follows that conical type singularities do not occur on horizon rods and the metric is smooth there, as a consequence of the Einstein equations (and scaling of the time coordinate), rather than the balancing of certain parameters. Regularity of the solution at axis rods, corners, and poles, however, does rely on the balancing of parameters to relieve singularities, as we show below. It should be noted that for general metrics, solid (higher dimensional) cone angles can simultaneously occur independently of whether internal axis conical singularities are present. On the other hand, since the metric \eqref{metric00} satisfies the Einstein equations, in particular due to the relation between $\alpha$ and $u$, $v$, this independence of the two types of cone angles does not occur. That is, balancing of the axis cones implies balancing of the solid cones for the solutions we consider.

\begin{theorem}
The static vacuum spacetimes constructed in Examples \ref{trivial}-\ref{string-black-saturns} are geometrically regular.
More precisely, any degeneracy appearing in the Weyl-Papapetrou expression of the metric \eqref{metric00}, arises solely as a coordinate singularity.
\end{theorem}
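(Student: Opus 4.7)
The plan is to reduce each remaining type of apparent metric singularity in \eqref{metric00} to a local coordinate change, using the real-analyticity of $u$, $v$, $\alpha$ on $\R^3\setminus\Gamma$ together with the balance conditions $b_l=0$ that were built into each of Examples~\ref{trivial}--\ref{string-black-saturns}. The interior of horizon rods has already been treated in the paragraph preceding the theorem via Kruskal-like coordinates, so the remaining strata are interior points of axis rods, corner points, and poles. I would treat them in this order.

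For the interior of an axis rod $\Gamma_l$ (rod structure $(1,0)$, say, the case $(0,1)$ being symmetric and the only other one occurring), the explicit representation of $u$ as a renormalized sum of potentials $U_{\Gamma_j}$ combined with \eqref{kilo} gives an expansion
\begin{equation*}
u = 2\log\rho + h(\rho,z),
\end{equation*}
where $h$ extends analytically in $(\rho^2,z)$ across $\Gamma_l$, while $v$ is analytic there. Integrating \eqref{alpha1} yields $e^{2\alpha}=C(\rho^2,z)$ with $C$ analytic and positive near $\Gamma_l$. The balance $b_l=0$ enforced in each example translates into $C(0,z)=e^{h(0,z)}$ on $\Gamma_l$, so that the polar-to-Cartesian substitution $(x,y)=(\rho\cos\phi,\rho\sin\phi)$ converts $e^{2\alpha}d\rho^2+e^u d\phi^2$ into $C(0,z)(dx^2+dy^2)$ plus higher-order analytic terms. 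Since the remaining factors in \eqref{metric00} are manifestly analytic near $\Gamma_l$, this exhibits a smooth extension across the interior of the axis rod.

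For a corner $P=\Gamma_l\cap\Gamma_{l+1}$, the two neighboring rod structures in all five examples are $(1,0)$ and $(0,1)$, so the admissibility condition is automatic and both rotational Killing fields degenerate at $P$. The plan is to expand $u$ and $v$ in terms of the distances $r_1,r_2$ from $P$ to the two half-axes meeting there, introduce twin polar charts $(r_1,\phi)$ and $(r_2,\psi)$, and then pass to Cartesian coordinates in each. The axis balance on both adjacent rods, together with the Einstein-driven relation \eqref{alpha1} tying $\alpha$ to $u$ and $v$, forces the two $2$-dimensional angle defects to vanish simultaneously and ensures that the associated $4$-dimensional solid-angle contribution at $P$ is automatically resolved, as emphasized in the paragraph preceding the theorem. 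I expect this corner step to be the main technical obstacle; however, once the asymptotic expansions of $u,v,\alpha$ near $P$ are in hand, the remaining local model calculation is essentially the bi-axisymmetric corner analysis already carried out in \cite{khurimatsumotoweinsteinyamada}, which can be invoked to conclude smoothness.

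For a pole $P=\Gamma_l\cap\Gamma_h$, one rotational Killing field degenerates simultaneously with $\partial_t$, so the approach is to combine the Kruskal-like coordinates $(X,Y)$ used above for the horizon interior with the Cartesian coordinates $(x,y)$ of the axis rod, using $XY=\rho^2$ and $(x,y)=(\rho\cos\phi,\rho\sin\phi)$. The axis balance $b_l=0$ and the horizon identity $\alpha=-\tfrac12(u+v)-\log c$ provide the needed compatibility between the two asymptotic expansions meeting at $P$, and the corner-style Cartesian argument then extends the metric smoothly through $P$. Combining the three cases with the already established horizon-interior regularity yields geometric regularity of the entire spacetime, completing the proof.
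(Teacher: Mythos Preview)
Your treatment of the axis-rod interior is correct and matches the paper's argument essentially verbatim. The gap lies in your handling of corners and, especially, poles.

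At a corner or pole the two degenerations (two axes, or one axis and the horizon) occur along the \emph{same} locus $\rho=0$ in Weyl--Papapetrou coordinates, separated only by the sign of $z$. Your pole proposal sets $XY=\rho^2$ and $(x,y)=(\rho\cos\phi,\rho\sin\phi)$ simultaneously; but then $x^2+y^2=XY$, so $(X,Y,x,y)$ are not four independent coordinates and cannot chart a neighborhood of $P$ in the time slice. Likewise, your ``twin polar charts $(r_1,\phi)$, $(r_2,\psi)$'' at a corner are not well defined as independent coordinates when both half-axes sit on $\rho=0$. The missing device is a conformal square-root map in the orbit space,
\[
z+i\rho=\tfrac12(\xi+i\eta)^2 \quad (\text{corner}), \qquad z+i\rho=(\xi+i\eta)^2 \quad (\text{pole}),
\]
which unfolds the angle at $P$ from $\pi$ to $\pi/2$ and sends the two adjacent rods to the orthogonal half-lines $\{\eta=0\}$ and $\{\xi=0\}$. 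Only after this separation can one legitimately introduce Cartesian coordinates $(x_1,y_1)=(\eta\cos\phi,\eta\sin\phi)$, $(x_2,y_2)=(\xi\cos\psi,\xi\sin\psi)$ at a corner, or Cartesian in $\eta$ together with Kruskal coordinates $XY=\xi^2$ in $\xi$ at a pole. The paper carries this out explicitly: writing $u=\log(r-z)+\bar u$, $v=\log(r+z)+\bar v$ at a corner (respectively $u=\log(r-z)+\bar u$ with $v$ smooth at a pole), integrating \eqref{alpha1} to obtain $\alpha=-\tfrac12\log r+\cdots$, and then checking that the balance conditions $b_l=0$ force exactly the constants needed for the metric to be smooth in the new chart. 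Your deferral to \cite{khurimatsumotoweinsteinyamada} for the corner case is defensible as a citation, but you should be aware that the same square-root map is the operative step there as well, and your pole argument needs it explicitly.
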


\begin{proof}
It remains to check regularity at the interior of axis rods, and in neighborhoods of corners and poles. Consider first a small neighborhood $\mathcal{V}_a \subset\mathcal{M}^5/[\mathbb{R}\times U(1)^2]$ of an interior point to an axis rod $\Gamma_l$, with the property that $\mathcal{V}$ does not intersect the endpoints of $\Gamma_l$. We may assume without loss of generality that the rod structure of $\Gamma_l$ is $(1,0)$. Then $u=2\log\rho+\bar{u}$ within $\mathcal{V}_a\setminus\Gamma_l$, where $\bar{u}$ and $v$ are smooth within $\mathcal{V}_a$. Indeed, this is clear for $v$ since $\mathcal{V}_a$ does not intersect axis rods where $v$ blows-up, and the statement is valid for $\bar{u}$ since $u-2\log\rho$ is bounded and harmonic on $\mathcal{V}_a$ away from a set of codimension 2 \cite[Lemma 8]{weinsteinDMJ}. Substituting $u=2\log\rho+\bar u$ into \eqref{alpha1} produces
\begin{align}
\begin{split}
\alpha_{\rho}=&\frac{1}{2}\bar{u}_{\rho}+\frac{\rho}{4}\left(\bar{u}_{\rho}^2 -\bar{u}_z^2
+v_{\rho}^2-v_z^2+\bar{u}_{\rho}v_{\rho}-\bar{u}_z v_z\right),\\
\alpha_z =&\frac{1}{2}\bar{u}_{z}+\frac{\rho}{4}\left(2\bar{u}_{\rho} \bar{u}_z +2v_{\rho} v_z
\bar{u}_{\rho} v_z +\bar{u}_z v_{\rho}\right).
\end{split}
\end{align}
It follows that $\alpha=\frac{1}{2}\bar{u}+c+O(\rho^2)$ in $\mathcal{V}_a$, for some constant $c$. The spacetime metric within the lift of $\mathcal{V}_a$ may then be expressed as
\begin{equation}
g=-e^{-\bar{u}-v}dt^2 +e^{2c+\bar{u}+O(\rho^2)}(d\rho^2 +dz^2) +\rho^2 e^{\bar{u}} d\phi^2 +e^v d\psi^2.
\end{equation}
Since there is no conical singularity present on $\Gamma_l$, we must have $c=0$. Furthermore, the change of coordinates $x=\rho \cos\phi$, $y=\rho \sin\phi$ yields
\begin{equation}
e^{\bar{u}+O(\rho^2)}d\rho^2 +\rho^2 e^{\bar{u}} d\phi^2=e^{\bar{u}}\left(dx^2 +dy^2\right)
+O(1)\left(xdx+ydy\right)^2.
\end{equation}
It follows that the metric \eqref{metric00} is smooth within the interior of axis rods.

Next consider $\mathcal{V}_c\subset\mathcal{M}^5/[\mathbb{R}\times U(1)^2]$ be a neighborhood of a corner, which separates two axis rods: $\Gamma_1$ to the north (without loss of generality) having rod structure $(1,0)$, and axis rod $\Gamma_2$ to the south having rod structure $(0,1)$. Let $\mathcal{V}_c$ be small enough so that $u=\log(r-z)+\bar u$, and $v=\log(r+z)+\bar{v}$ where $\bar u$ and $\bar{v}$ are smooth. The origin of the coordinate system is centered at the corner point. Insert these expressions into \eqref{alpha1} to find
\begin{align}
\begin{split}
\alpha_{\rho}=&-\frac{\rho}{2r^2}+\frac{(r+z)}{4r}\bar{u}_{\rho}
+\frac{(r-z)}{4r}\bar{v}_{\rho}+\frac{\rho}{4r}(\bar{u}_z -\bar{v}_z)+O(\rho),\\
\alpha_z =&-\frac{z}{2r^2}+\frac{(r+z)}{4r}\bar{u}_{z}
+\frac{(r-z)}{4r}\bar{v}_{z}-\frac{\rho}{4r}(\bar{u}_{\rho} -\bar{v}_{\rho})+O(\rho^2).
\end{split}
\end{align}
On $\mathcal{V}_c$ we then have
\begin{equation}
\alpha=-\frac{1}{2}\log r +\frac{(r+z)}{4r}(\bar{u}-\bar{u}(0))+\frac{(r-z)}{4r}(\bar{v}-\bar{v}(0))+c +O(\rho^2),
\end{equation}
for some constant $c$ and where $\bar{u}(0)$, $\bar{v}(0)$ denote these functions evaluated at the corner. Using these expansions the spacetime metric then becomes
\begin{align}
\begin{split}
g=&-e^{-\bar{u}-\bar{v}}dt^2 +(r-z) e^{\bar{u}} d\phi^2 +(r+z)e^{\bar{v}} d\psi^2\\
&+r^{-1}\exp\left(2c+\frac{(r+z)}{2r}(\bar{u}-\bar{u}(0))
+\frac{(r-z)}{2r}(\bar{v}-\bar{v}(0))+O(\rho^2)\right)(d\rho^2 +dz^2).
\end{split}
\end{align}
The absence of a conical singularities on the two axis rods $\Gamma_1$, $\Gamma_2$ implies that $\bar{u}(0)=\bar{v}(0)$ and $2 e^{2c}=e^{\bar{u}(0)}=e^{\bar{v}(0)}$.
Observe that the geometric angle at the pole between the two axis rods is $\pi/2$, and not $\pi$ as it is in the quotient space. This motivates the change to new coordinates $\xi,\eta\geq 0$ given by $z+i\rho=\frac{1}{2}(\xi+i\eta)^2$, or rather $\rho=\xi\eta$, $z=\frac{1}{2}\left(\xi^2-\eta^2\right)$. We then have
\begin{align}
\begin{split}
g=&-e^{-\bar{u}-\bar{v}} dt^2 +\eta^2 e^{\bar{u}} d\phi^2 +\xi^2 e^{\bar{v}} d\psi^2\\
&+ \exp\left(\bar{u}(0)+\frac{\xi^2}{\xi^2+\eta^2}(\bar{u}-\bar{u}(0))+
\frac{\eta^2}{\xi^2 +\eta^2}(\bar{v}-\bar{v}(0))+O\left(\xi^2 \eta^2\right)\right)(d\xi^2 +d\eta^2).
\end{split}
\end{align}
Next define two pairs of Cartesian coordinates $x_1=\eta \cos\phi$, $y_1=\eta \sin\phi$, and $x_2=\xi \cos\psi$, $y_2=\xi \sin\psi$ with
\begin{equation}
d\eta^2 +\eta^2 d\phi^2=dx_1^2 +dy_1^2, \quad\quad \eta^2 d\eta^2=(x_1 dx_1+y_1dy_1)^2,
\end{equation}
\begin{equation}
d\xi^2 +\xi^2 d\psi^2=dx_2^2 +dy_2^2, \quad\quad \xi^2 d\xi^2=(x_2 dx_2+y_2dy_2)^2.
\end{equation}
It follows that
\begin{align}
\begin{split}
g=& -e^{-\bar{u}-\bar{v}}dt^2 +e^{\bar{u}}\left(\eta^2 d\phi^2 +d\eta^2
+O\left(\frac{\bar{u}-\bar{v}-\bar{u}(0)+\bar{v}(0)}{\xi^2 +\eta^2}+\xi^2\right) \eta^2 d\eta^2\right)\\
&+e^{\bar{v}}\left(\xi^2 d\psi^2 +d\xi^2+ O\left(\frac{\bar{u}-\bar{v}-\bar{u}(0)+\bar{v}(0)}{\xi^2 +\eta^2}+\eta^2\right) \xi^2 d\xi^2\right)\\
=&-e^{-\bar{u}-\bar{v}}dt^2 +e^{\bar{u}}\left(dx_1^2 +dy_1^2 +O\left(\bar{r}^2\right)(x_1 dx_1 +y_1 dy_1)^2 \right)\\
&+e^{\bar{v}}\left(dx_2^2 +dy_2^2 +O\left(\bar{r}^2\right)(x_2 dx_2 +y_2 dy_2)^2 \right),
\end{split}
\end{align}
where $\bar{r}^2=\sum (x_i^2 +y_i^2)$ and we have used the fact that $\bar{u}-\bar{v}$ is smooth at the corner. Therefore the metric \eqref{metric00} is regular near corner points.

Now let $\mathcal{V}_p\subset\mathcal{M}^5/[\mathbb{R}\times U(1)^2]$ be a neighborhood of a pole, separating an axis rod $\Gamma_l$ to the north (without loss of generality) having rod structure $(1,0)$, and a horizon rod $\Gamma_h$ to the south. As before, the neighborhood $\mathcal{V}_p$ may be chosen small enough so that $u=\log(r-z)+\bar u$, where $\bar u$ and $v$ are smooth. Here the origin of the coordinate system is centered at the pole. Substituting this into \eqref{alpha1} produces
\begin{align}
\begin{split}
\alpha_{\rho}=&-\frac{\rho}{2r^2}+\frac{z}{2r}\bar{u}_{\rho}
+\frac{(z-r)}{4r}v_{\rho}+\frac{\rho}{4r}(2\bar{u}_z +v_z)+O(\rho),\\
\alpha_z =&-\frac{z}{2r^2}+\frac{z}{2r}\bar{u}_{z}
+\frac{(z-r)}{4r}v_{z}-\frac{\rho}{4r}(2\bar{u}_{\rho} +v_{\rho})+O(\rho^2).
\end{split}
\end{align}
It follows that in $\mathcal{V}_p$ we have
\begin{equation}
\alpha=-\frac{1}{2}\log r +\frac{z}{2r}(\bar{u}-\bar{u}(0))+\frac{(z-r)}{4r}(v-v(0))+c +O(\rho^2),
\end{equation}
for some constant $c$ and where $\bar{u}(0)$, $v(0)$ denote these functions evaluated at the pole. With these expansions the spacetime metric then takes the form
\begin{align}
\begin{split}
g=&-\frac{\rho^2 e^{-\bar{u}-v}}{r-z}dt^2 +(r-z) e^{\bar{u}} d\phi^2 +e^v d\psi^2\\
&+r^{-1}\exp\left(2c+\frac{z}{r}(\bar{u}-\bar{u}(0))
+\frac{(z-r)}{2r}(v-v(0))+O(\rho^2)\right)(d\rho^2 +dz^2).
\end{split}
\end{align}
The absence of a conical singularity on the axis rod guarantees that $2 e^{2c}=e^{\bar{u}(0)}$.
As above we are motivated to change to new coordinates $\xi,\eta\geq 0$ given by $z+i\rho=(\xi+i\eta)^2$, or rather $\rho=2\xi\eta$, $z=\xi^2-\eta^2$. This yields
\begin{align}
\begin{split}
g=&-2\xi^2 e^{-\bar{u}-v} dt^2 +2\eta^2 e^{\bar{u}} d\phi^2 +e^v d\psi^2\\
&+2 \exp\left(\bar{u}(0)+\frac{\xi^2}{\xi^2+\eta^2}(\bar{u}-\bar{u}(0))-
\frac{\eta^2}{\xi^2 +\eta^2}(\bar{u}+v-\bar{u}(0)-v(0))+O\left(\xi^2 \eta^2\right)\right)(d\xi^2 +d\eta^2).
\end{split}
\end{align}
Next consider Cartesian coordinates $x=\eta \cos\phi$, $y=\eta \sin\phi$, and Kruskal-like coordinates $X,Y>0$ with $XY=\xi^2$, $X/Y=\exp\left(2te^{-\bar{u}(0)-\frac{1}{2}v(0)}\right)$, so that
\begin{equation}
d\eta^2 +\eta^2 d\phi^2=dx^2 +dy^2, \quad\quad \eta^2 d\eta^2=(x dx+ydy)^2,
\end{equation}
\begin{equation}
d\xi^2 -e^{-2\bar{u}(0)-v(0)}\xi^2 dt^2=dXdY,\quad\quad \xi^2 d\xi^2=\frac{1}{4}(YdX+XdY)^2.
\end{equation}
It follows that
\begin{align}
\begin{split}
g=& 2e^{-\bar{u}-v+2\bar{u}(0)+v(0)}\left(dX dY+O\left(\frac{2\bar{u}+v-2\bar{u}(0)-v(0)}{\xi^2 +\eta^2}+\eta^2\right)(YdX+XdY)^2 \right)\\
&+2e^{\bar{u}}\left(dx^2 +dy^2 +O\left(\frac{2\bar{u}+v-2\bar{u}(0)-v(0)}{\xi^2 +\eta^2}+\xi^2\right)(xdx +ydy)^2 \right)
+e^v d\psi^2.
\end{split}
\end{align}
Since $2\bar{u}+v$ is a smooth function at the pole we find that the error terms above are regular, and hence the metric \eqref{metric00} is smooth near horizon poles.
\end{proof}

\subsection{Asymptotics}

Here we show that the solutions constructed in Section~\ref{kn} exhibit Kasner asymptotics. Recall that the Kasner metric on $\mathbb{R}^{n,1}$ is given by
\begin{equation}
g_{K}=-dt^2 +\sum_{i=1}^n t^{2p_i} dx_i^2.
\end{equation}
This satisfies the vacuum Einstein equations precisely when the Kasner conditions hold:
\begin{equation}\label{kasner1}
\sum_{i=1}^n p_i =1,\quad\quad\quad\quad \sum_{i=1}^n p_i^2=1.
\end{equation}
The space-periodic solutions that we produce have metrics of the form \eqref{metric00}, where the asymptotics of the coefficients as $\rho\rightarrow\infty$ are given by
\begin{equation}
u\thicksim A\log\rho,\quad\quad
v\thicksim B\log\rho,\quad\quad\alpha\thicksim C\log\rho,
\end{equation}
with $A,B>0$ and
\begin{equation}\label{cccc}
C=\frac{1}{8}\left(A^2 +B^2 +AB -2(A+B)\right).
\end{equation}
It follows that
\begin{equation}
g\thicksim \rho^{2C}(d\rho^2 +dz^2)-\rho^{2-A-B}dt^2 +\rho^A d\phi^2 +\rho^B d\psi^2.
\end{equation}
Since $g$ is a solution of the vacuum Einstein equations,
it is not surprising that the powers of $\rho$ in the above expression satisfy the Kasner conditions. Indeed, set $\tau=\rho^{C+1}$ and observe that $C+1>0$ together with
\begin{equation}
g\thicksim d\tau^2+\tau^{\frac{2C}{C+1}}dz^2 -\tau^{\frac{2-A-B}{C+1}}dt^2
+\tau^{\frac{A}{C+1}}d\phi^2 +\tau^{\frac{B}{C+1}}d\psi^2.
\end{equation}
It may then be verified that the Kasner conditions \eqref{kasner1} hold for any values of $A$ and $B$, as long as $C$ is given by \eqref{cccc}. In this way, the solutions of Section \ref{kn} are asymptotically Kasner. Note, however, that the role of `time' is played here by the spatial variable $\rho$ when the metric is viewed within the Kasner context.

\medskip

\textbf{Acknowledgements.} This material is based upon work supported by the Swedish Research Council under grant no. 2016-06596 while the authors were in residence at Institut Mittag-Leffler in Djursholm, Sweden during the Fall semester of 2019.

\bibliographystyle{amsplain}
\bibliography{mybib}

\end{document}